\newtheorem{theorem}{Theorem}[section]
\newtheorem{lemma}[theorem]{Lemma}
\theoremstyle{definition}
\theoremstyle{remark}
\newtheorem{remark}[theorem]{Remark}
\numberwithin{equation}{section}
\begin{document}
\vspace{2cm}
 \title[A simple construction of associative deformations ]{A simple construction of associative deformations}  

\author{Alexey A. Sharapov}
\address{Physics Faculty, Tomsk State University, Lenin ave. 36, Tomsk 634050, Russia}
\email{sharapov@phys.tsu.ru}
\thanks{The first author was supported in part by RFBR
Grant No. 16-02-00284 A and by Grant No. 8.1.07.2018 from ``The Tomsk State University competitiveness improvement programme''.}

\author{Evgeny D. Skvortsov}
\address{Albert Einstein Institute, 
Am M\"{u}hlenberg 1, D-14476, Potsdam-Golm, Germany}
\address{Lebedev Institute of Physics, 
Leninsky ave. 53, 119991 Moscow, Russia}
\email{evgeny.skvortsov@aei.mpg.de}
\thanks{The second author was partially supported by RSF Grant No. 18-72-10123 in association with Lebedev Physical Institute.}

\subjclass[2010]{Primary 16S80; Secondary 	16E40, 53D55}


\keywords{Deformation quantization, noncommutative Poisson structures, symplectic reflection algebras, injective resolution}

\begin{abstract}
We propose a simple approach to formal deformations of associative algebras. It exploits the machinery of multiplicative coresolutions of an associative algebra $A$ in the category of $A$-bimodules. Specifically, we show that certain first-order deformations of $A$ extend to all orders and we derive explicit recurrent formulas determining this extension.  
In physical terms, this  may be regarded as the deformation quantization of noncommutative Poisson structures on $A$. 
\end{abstract}

\maketitle
\section{Introduction}

The algebraic deformation theory is at the heart of several branches of modern mathematical physics: quantum groups, deformation quantization, noncommutative geometry and field theory are just a few examples.  For associative algebras the deformation theory is known to be closely related to the Hochschild cohomology \cite{GSh}. In particular, the infinitesimal deformations of an associative algebra $A$ are classified by the elements of the second cohomology group $HH^2(A,A)$, while the third group $HH^3(A,A)$ controls the obstructions to integrability of infinitesimal deformations. As was first observed by Gerstenhaber  \cite{Gerst}, the groups $HH^\bullet(A,A)$ carry a rich algebraic structure, namely, that of a graded Poisson algebra.  Among other things this allows one to transfer the notion of a Poisson structure from smooth manifolds to  noncommutative algebras \cite{BG}, \cite{Xu}. 

By definition, a Poisson structure on an associative algebra $A$ is an element $\Pi\in HH^2(A,A)$ whose Gerstenhaber bracket with itself vanishes, that is, $[\Pi,\Pi]=0$. The problem of noncommutative deformation quantization can now be formulated as follows \cite{Tang}: given a Poisson structure $\Pi$ on $A$, construct an associative $\ast$-product in $A[[t]]$ such that 
\begin{equation}\label{star}
a\ast b =ab+\sum_{n=1}^\infty t^n\mu_n (a,b) \qquad \forall a,b\in A
\end{equation}
and the Hochschild cohomology class of $[\mu_1]$ is equal to $\Pi$.  As particular cases this includes the usual (i.e.,`commutative') deformation quantization of Poisson manifolds \cite{BFFLS} as well as the noncommutative deformation theory of Refs. \cite{Pinczon-1}, \cite{Nadaud-1}. 

While the problem of quantizing smooth Poisson manifolds has been completely solved by Kontsevich \cite{Konts}, only a few explicit examples of genuine noncommutative Poisson structures and their deformation quantization are available in the literature.  Most of them are related to the smash-product algebras $A=B\rtimes \Gamma $, with $\Gamma$ being a finite group of automorphisms of  an algebra $B$.  The case  that $B$ is the algebra of smooth functions on a $\Gamma$-manifold $M$ was thoroughly studied in \cite{NPPT}, \cite{HT}, \cite{HOT}. The noncommutative algebra $C^\infty(M)\rtimes \Gamma$ is known to be a good substitute for the commutative algebra $C^\infty(M)^\Gamma$ of $\Gamma$-invariant functions whenever the action of $\Gamma$ on $M$ is not free, so that the quotient space $M/\Gamma$ is singular \cite{Co}.  In \cite{HT}, it was shown that, in addition to the usual Poisson bivectors on $M/\Gamma$, the algebra $C^\infty(M)\rtimes \Gamma$
admits noncommutative Poisson structures with supports on codimention $2$, fixed-point submanifolds $M^\gamma\subset M$, $\gamma\in \Gamma$. If  one takes  $M$ to be a symplectic vector space $V$ endowed with a faithful action of a symplectic reflection group $\Gamma\subset \mathrm{Sp}(V)$, then the corresponding quantum algebra belongs to the class of symplectic reflection algebras introduced in \cite{EG}. 
One more example, where $B$ is the polynomial Weyl algebra  on two generators and $\Gamma= \mathbb{Z}_2$,  was considered in \cite{Pinczon-1}. 

It should be noted that in most known examples of noncommutative deformation quantization the corresponding  $\ast$-products were not obtained in an explicit form. 
Their  existence followed indirectly from the Poincar\'e--Birkhoff--Witt property of the corresponding quantum algebras.  Perhaps the only exception is the paper \cite{Nadaud-2}, where a Moyal-type formula was derived  for the deformation quantization of $C^\infty(\mathbb{R}^2)\rtimes \mathbb{Z}_n$.  

In this paper, we propose a new approach to noncommutative deformation quantization. The main advantage of our method is that, similar to the Fedosov deformation quantization \cite{Fedosov}, it provides explicit recurrent formulas for $\ast$-products, and not just existence theorems. Let us briefly explain the basic idea behind our approach.   

As we have already mentioned, the deformations of an associative algebra $A$ are governed by the Hochschild cohomology groups whose formal definition is the following:
$$
HH^\bullet (A,A) =\mathrm{Ext}_{A^e}^\bullet (A,A)\,.
$$
Here $A^e=A\otimes A^{\mathrm{op}}$ is the enveloping algebra of $A$. To compute the $\mathrm{Ext}$-groups on the right one may use either projective or injective resolutions of the algebra 
$A$ viewed as a module over $A^e$. Let us denote these resolutions by $\mathcal{A}_P$ and $\mathcal{A}_J$:

$$
\xymatrix { \cdots\ar[r]&\mathcal{A}_P^2\ar[r]  &\mathcal{A}_P^1\ar[r]  &\mathcal{A}_P^0\ar[r]  &A\ar[r]& 0\,,}
$$
$$
\xymatrix {  0\ar[r]& A\ar[r] &\mathcal{A}_J^0 \ar[r]& \mathcal{A}_J^1\ar[r]& \mathcal{A}_J^2\ar[r] &\cdots \,.}
$$
It is the standard result of homological algebra (see e.g. \cite{ML}) that 
\begin{equation}\label{Ext}
H^\bullet (\mathrm{Hom}_{A^e}(\mathcal{A}_P, A))\simeq \mathrm{Ext}^\bullet_{A^e } (A,A)\simeq H^\bullet (\mathrm{Hom}_{A^e}(A, \mathcal{A}_J))\,.
\end{equation}
There is also a definition symmetric with respect to both the arguments of $\mathrm{Ext}$-functor:
$$
\mathrm{Ext}^\bullet_{A^e } ({A},A)\simeq H^\bullet (\mathrm{Hom}_{A^e}(\mathcal{A}_P, \mathcal{A}_J))\,.
$$

The most popular way of computing the Hochschild cohomology is through the projectives. Among various projective resolutions of the algebra $A$ there exists  the  standard resolution $\mathcal{B}(A)$, called also  the bar-resolution. This appears naturally in many specific problems including deformation theory. Despite its theoretical importance, the bar-resolution does not  help much  in practical computations with infinite-dimensional algebras.  In order to compute the Hochschild cohomology of a given algebra $A$ one has to find a special projective resolution $\mathcal{A}_P$.  In the context of deformation quantization all the computations with $\mathcal{A}_P$ should then be followed by the construction of a  quasi-isomorphism from the complex $\mathrm{Hom}_{A^e}(\mathcal{A}_P,A)$ to the standard bar-complex $\mathrm{Hom}_{A^e}(\mathcal{B}(A),A)$ to get an explicit formula for the $\ast$-product.  Generally this last step may not be an easy task.
Some explicit examples of quasi-isomorphisms between the Koszul and bar-complexes can be found in \cite{HT}, \cite{BGHHW}, \cite{Riv}, \cite{SW-2}. 

Rel. (\ref{Ext}) suggests  also an alternative way to compute the Hochschild cohomology, namely, through the injectives. It is easy to see that 
\begin{equation}
    \mathrm{Hom}_{A^e}(A,\mathcal{A}_J)\simeq \mathcal{A}_J^A\,,
\end{equation}
where $\mathcal{A}_J^A\subset \mathcal{A}_J$ is the submodule of $A$-invariants:
$$
\mathcal{A}_J^A=\{a\in \mathcal{A}_J\;|\; ab=ba\,,\,  \forall b\in A\}\,.
$$
Hence, the Hochschild cohomology of $A$ is isomorphic to the cohomology of the subcomplex $\mathcal{A}_J^A$, i.e.,
\begin{equation}\label{isom}
HH^\bullet(A,A)\simeq H^\bullet(\mathcal{A}_J^A)\,.
\end{equation}
Having in mind the deformation problem for the algebra $A$, it is quite natural to look for injective resolutions $\mathcal{A}_J$ that are differential graded algebras, and not just $A^e$-modules. In Sec. \ref{Def}, we show that certain elements of $H^2(\mathcal{A}^A_J)$  give rise to a formal deformation of $A$; in so doing, the corresponding $\ast$-product (\ref{star}) is recurrently  defined in terms of the Hochschild differential and the contracting homotopy operator for the  resolution $\mathcal{A}_J$. Moreover, the requirement of injectivity is not actually needed for our construction: in many interesting cases one can use non-injective coresolutions to produce nontrivial deformations. In Sec. \ref{Examples}, we illustrate  the general method by several examples of commutative and noncommutative algebras.

\section{Preliminaries}
\label{sec-1}

Throughout this paper, $k$ is an arbitrary field of characteristic zero and all unadorned tensor products $\otimes$ and Homs are taken over $k$. 

By a differential graded $k$-algebra (DG-algebra for short) we mean a non-negatively graded, unital, associative algebra  $\mathcal{A}=\bigoplus_{n\geq 0}\mathcal{A}^n$  over $k$ endowed with a differential $d:\mathcal{A}^n\rightarrow \mathcal{A}^{n+1}$ such that 
$$ \mathcal{A}^n\cdot \mathcal{A}^m\subset \mathcal{A}^{n+m}\,,\qquad 
d(ab)=(da)b+(-1)^{|a|}adb\,, \qquad d^2=0\,,
$$
where $|a|$ is the degree of a homogeneous element $a\in \mathcal{A}^{|a|}$. Notice that $\mathcal{A}^0$ is always a (non-differential) subalgebra in $\mathcal{A}$.

Given a DG-algebra $\mathcal{A}$, we can define the bicomplex  $C^{\bullet,\bullet}(\mathcal{A},\mathcal{A})$ of $k$-vector spaces 
$$
C^{n,m}(\mathcal{A},\mathcal{A})=\{f\in \mathrm{Hom}(\mathcal{A}^{\otimes n},\mathcal{A})\;|\; f \,\mbox{is a linear map of degree } m\}
$$
equipped with the pair of differentials
$$
d:C^{n,m}(\mathcal{A},\mathcal{A})\rightarrow C^{n,m+1}(\mathcal{A},\mathcal{A})\,,\qquad \delta:C^{n,m}(\mathcal{A},\mathcal{A})\rightarrow C^{n+1,m}(\mathcal{A},\mathcal{A})\,.
$$
For any cochain
$f: \mathcal{A}^{\otimes n}\rightarrow\mathcal{ A}$ we set
$$
(d f)(a_1\otimes \cdots\otimes a_n)= d f(a_1\otimes\cdots\otimes a_n)-\sum_{i=1}^n (-1)^{\epsilon_i} f(a_1\otimes\cdots\otimes da_i\otimes\cdots\otimes a_n)\,,
$$
$$
(\delta f)(a_1\otimes \cdots\otimes a_n)= -(-1)^{(|a_1|+1)|f|}a_1 f(a_2\otimes \cdots\otimes a_n)
$$
$$
-\sum_{i=2}^n (-1)^{\epsilon_i} f(a_1\otimes\cdots \otimes a_{i-1}a_{i}\otimes\cdots \otimes a_n)
$$
$$
+(-1)^{\epsilon_n}f(a_1\otimes\cdots\otimes a_{n-1})a_n\,.
$$
Here  $\epsilon_i=|f|+|a_1|+\cdots+|a_{i-1}|-i+1$ and $|f|=m+n$ is the total degree of the cochain $f$.  
 It follows from the definition that
$$
d^2=\delta^2=0\,,\qquad d\delta+\delta d=0
$$
and we can define the total differential $D=\delta+d$ that increases the total degree by $1$. The Hochschild cohomology groups $HH^\bullet(\mathcal{A},\mathcal{A})$ of the DG-algebra $\mathcal{A}$ with coefficients in itself  are now defined to be the cohomology of the total complex 
$$\mathcal{C}^{\bullet}(\mathcal{A},\mathcal{A}) =\mathrm{Tot}\,C^{\bullet,\bullet}(\mathcal{A},\mathcal{A})\,.$$

The space of Hochschild cochains $\mathcal{C}^{\bullet}(\mathcal{A},\mathcal{A})$ has the structure of graded  Lie algebra with respect to the Gerstenhaber bracket 
$$
[f,g]=f\circ g-(-1)^{(|f|+1)(|g|+1)}g\circ f\,,
$$
where 
$$
f\circ g =\sum_{i=0}^{n-1}(-1)^{(|g|+1)\sum_{j=1}^i(|a_j|+1)} f(a_1\otimes \cdots\otimes a_i\otimes g(a_{i+1}\otimes \cdots\otimes a_{i+1+m})\otimes \cdots \otimes a_{m+n-1})\,.
$$
This bracket satisfies the standard properties 
$$
[f,g]=-(-1)^{(|f|+1)(|g|+1)}[g,f]\,,
$$
$$
[[f,g],h]=[f,[g,h]]-(-1)^{(|f|+1)(|g|+1)}[g,[f,h]]\,,
$$
$$
d [f,g]=[d f,g]+(-1)^{(|f|+1)}[f,d g]\,.
$$
In addition,
$$
\delta f=[m,f]\,,\qquad m(a_1,a_2)=(-1)^{|a_1|}a_1a_2\,,
$$
and
$$
\delta [f,g]=[\delta f,g]+(-1)^{(|f|+1)}[f,\delta g]\,.
$$
As is seen, the coboundary operator  $D=\delta + d$ of the total complex differentiates  the Gerstenhaber bracket, so that we can speak of the differential graded Lie algebra $(\mathcal{C}^{\bullet}(\mathcal{A},\mathcal{A}), D)$. Clearly, the Gerstenhaber bracket descends to the cohomology making the $k$-vector space $HH^\bullet(\mathcal{A},\mathcal{A})$ into a graded Lie algebra.  

The usual definitions of the Hochschild cohomology and Gerstenhaber bracket for nongraded algebras are obtained from the above  formulas  just assuming  $\mathcal{A}$ to be concentrated in zero degree, i.e., $\mathcal{A}=\mathcal{A}^0$. Then the total degree is given by the first degree of ${C}^{\bullet,\bullet}$ and $D=\delta$ becomes the usual Hochschild differential. 
 
One of the motivations to introducing the Gerstenhaber bracket comes from deformation theory. Let us rewrite  (\ref{star}) in the form   $a\ast b=ab+\mu(a,b)$, where the $2$-cochain $\mu\in \mathcal{C}^2(A[[t]],A[[t]])$ describes the formal deformation of the $k[[t]]$-algebra $A[[t]]$. Then one can see that the $\ast$-product is associative iff the cochain $\mu$ satisfies the Maurer--Cartan equation 
$$
\delta \mu=-\frac12 [\mu,\mu]\,.
$$
Expanding this equation in powers of $t$, we get the sequence of equations 
$$
\delta \mu_1=0\,,\qquad \delta \mu_2=-\frac12[\mu_1,\mu_1]\,,\qquad \ldots 
$$
The first equation identifies $\mu_1$ as a $2$-cocycle representing an element of $HH^2(A,A)$. According to  the second equation this cocyle squares to zero up to coboundary, that is, defines a Poisson structure on $A$. At $n$-th order we face the equation  
\begin{equation}\label{mupsi}
\delta \mu_n=\psi_n\,, \qquad \psi_n=-\frac12\sum_{k=1}^{n-1}[\mu_k,\mu_{n-k}]\,.
\end{equation}
It follows from the definition that the $3$-cochain $\psi_n$ is $\delta$-closed provided all the previous $n-1$ equations are satisfied. 
Thus, the $(n-1)$-th order deformation can be extended to the next order whenever the cocycle $\psi_n$ is trivial. The extension, if exists, is not unique as we are free to add to $\mu_n$ any $2$-cocycle, e.g. a trivial one. If $HH^3(A,A)=0$, then the obstruction space is empty and we conclude immediately that any infinitesimal deformation can be integrated to a global one. In the general case, however,  the existence of a solution to Eq. (\ref{mupsi}) depends not only on the algebra $A$ itself, but also on the choice of particular  solutions for $\mu_1,\ldots,\mu_{n-1}$. This makes the whole problem of constructing  global deformations 
extremely difficult: no general method is known for  solving  Eqs. (\ref{mupsi}).

\section{Formal deformation}\label{Def}

Let $(\mathcal{A}, d)$ be a DG-algebra with $\mathcal{A}^p=0$ for $p<0$ and let $H^\bullet(\mathcal{A})$ denote the corresponding cohomology groups. Then the intersection  $A=\mathcal{A}^0\cap \ker d\simeq H^0(\mathcal{A})$ is a differential graded subalgebra of $\mathcal{A}$ (concentrated in  zero degree and endowed with trivial  differential). The enveloping algebra $A^e=A\otimes A^{\mathrm{op}}$  of $A$ acts naturally on $\mathcal{A}$:
$$
(a\otimes b) c=acb\,,\qquad \forall a,b\in A\,, \quad\forall c\in   \mathcal{A}\,. 
$$
This allows us to think of $(\mathcal{A}, d)$ as a cochain complex of (left) modules over $A^e$. Let us further assume that  $H^p (\mathcal{A})=0$ for all $p>0$. Then,                                                    
\begin{equation} \label{I-res}
\xymatrix{0\ar[r] &A\ar[r]^{\varepsilon}&\mathcal{A}^0\ar[r]^{d}& \mathcal{A}^1\ar[r]^d& \cdots}
\end{equation}
is a coresolution of the $A^e$-module $A$.  Denote by $Z(\mathcal{A})$ the center of the algebra $\mathcal{A}$ and let $H^\bullet(\mathcal{A}^A)$ denote the cohomology groups of the subcomplex $\mathcal{A}^A\subset \mathcal{A}$ of $A$-invariant cochains. The main result of the paper can now be  formulated as follows.

\begin{theorem}\label{Th}
Any cohomology class  $[\lambda]\in H^2(\mathcal{A}^{A})$ with a representative   $\lambda \in Z(\mathcal{A})$ defines an integrable deformation of $A$.
\end{theorem}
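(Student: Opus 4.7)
The plan is to bootstrap the deformation from the coresolution~(\ref{I-res}) using a contracting homotopy. Since $\mathcal{A}$ is acyclic in positive degrees, I would fix a $k$-linear contracting homotopy $h:\mathcal{A}^{p}\to\mathcal{A}^{p-1}$ with $dh+hd=\mathrm{id}$ above degree zero and $hd=\mathrm{id}-\varepsilon\pi$ on $\mathcal{A}^0$, where $\pi:\mathcal{A}^0\to A$ is the canonical projection associated with $\varepsilon$. Under the isomorphism (\ref{isom}), the class $[\lambda]\in H^2(\mathcal{A}^A)$ corresponds to a Hochschild $2$-cocycle $\mu_1\in C^2(A,A)$, which will play the role of the infinitesimal deformation in (\ref{star}). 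A first-order check is that the centrality of $\lambda$ in the DG-algebra $\mathcal{A}$, together with the Leibniz-type identities relating $d$, $\delta$ and the Gerstenhaber bracket recalled in Sec.~\ref{sec-1}, forces $[\mu_1,\mu_1]=0$ in $HH^3(A,A)$, so that $\mu_1$ defines a genuine noncommutative Poisson structure on $A$.

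To extend $\mu_1$ to all orders, I would construct the higher cochains $\mu_n$ recursively by solving (\ref{mupsi}) with the help of $h$. The central idea is to lift the recursion to the coresolution: alongside $\mu_n$ I would carry auxiliary $\mathcal{A}$-valued cochains $\Phi_n^{(p)}\in C^{2-p}(A,\mathcal{A}^p)$ whose bottom layer $\pi\Phi_n^{(0)}$ is $\mu_n$ and whose higher layers are engineered so that a lifted Maurer--Cartan equation holds in $\mathcal{C}^\bullet(\mathcal{A},\mathcal{A})$ at order $n$. Because $\mathcal{A}$ is a DG-algebra, the Gerstenhaber bracket on $\mathcal{C}^\bullet(\mathcal{A},\mathcal{A})$ is compatible with both $d$ and $\delta$, so a term-by-term application of $h$ to the appropriate products of previously constructed $\Phi_k^{(p)}$ with $\lambda$ produces the next layer of $\Phi_n$; projecting through $\pi$ then yields the desired explicit recurrence for $\mu_n$.

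The main obstacle I expect is showing that, at each step of the induction, the obstruction cocycle $\psi_n$ in (\ref{mupsi}) actually lies in the image of $d$ on $\mathcal{A}$, and not merely in the kernel of $\delta$ up to cohomology, so that the contracting homotopy $h$ can be applied to it. This is the nontrivial consistency statement of the whole construction: it must be checked that the very freedom used to fix $\mu_1,\ldots,\mu_{n-1}$ through $h$ guarantees that the next obstruction is again $d$-exact. The hypotheses $d\lambda=0$ and $\lambda\in Z(\mathcal{A})$ should feed into an inductive invariant on the $\Phi_n^{(p)}$ that, combined with the graded Jacobi identity and Leibniz rules for $[\,\cdot\,,\,\cdot\,]$, keeps the obstruction in the image of $d$ at every order. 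Once this invariant is verified, the induction runs without further restriction, the sequence $\{\mu_n\}_{n\geq 1}$ satisfies all equations (\ref{mupsi}), and the associated $\ast$-product (\ref{star}) is associative, establishing the integrability of the infinitesimal deformation determined by $[\lambda]$.
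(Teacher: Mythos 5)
Your overall strategy --- lift the deformation problem into the coresolution, carry auxiliary $\mathcal{A}$-valued cochains alongside the $\mu_n$, and use the contracting homotopy $h$ to generate the recursion --- is exactly the right shape, and it matches the architecture of the paper's proof. But the proposal stops precisely where the proof has to begin. You correctly identify the crux: one must show that at every order the obstruction can actually be absorbed, i.e.\ that the consistency of the lifted system is preserved under the recursion. You then write that the hypotheses ``should feed into an inductive invariant'' that keeps the obstruction $d$-exact, without exhibiting that invariant or verifying it. That is the entire content of the theorem; as written, the argument is a plan for a proof rather than a proof.

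Two concrete points where your plan, if executed literally, would run into trouble. First, you propose to build the next layer of $\Phi_n$ from products of the previously constructed data \emph{with $\lambda$ itself}. In the paper's construction the cocycle $\lambda$ must itself be deformed order by order into a $t$-dependent cochain $\Lambda$ with $\Lambda_0=\lambda$ and $\Lambda_n=\frac1n\sum_{k=1}^n[\Phi_k,\Lambda_{n-k}]$ (equivalently $N\Lambda=[\Gamma,\Lambda]$ for the Euler operator $N=t\frac{d}{dt}$); keeping $\lambda$ fixed the recursion does not close. Second, the consistency is not established in the paper by an order-by-order exactness argument at all. Instead, the master equations $D_\mu\Gamma=t\Lambda-N\mu$ and $N\Lambda=[\Gamma,\Lambda]$ are solved uniquely via $h$ (this is Lemma 3.2), and then one sets $R=D\mu+\frac12[\mu,\mu]$ and $T=D_\mu\Lambda$ and shows that applying $D_\mu$ to the master equations yields a \emph{homogeneous linear} system of ODEs in $t$ for $(R,T)$ with regular right-hand side; since $T(0)=D\lambda=0$ (this is where centrality $\delta\lambda=0$ and closedness $d\lambda=0$ enter) and $R(0)=0$, uniqueness of formal solutions forces $R\equiv 0$. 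This ODE/uniqueness mechanism is the ``inductive invariant'' you were looking for, and without something playing that role your induction cannot be completed. Your side claim that centrality of $\lambda$ already forces $[\mu_1,\mu_1]=0$ in $HH^3(A,A)$ is also left unproved; in the paper it is a consequence of the full construction rather than a separately verified first step.
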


The rest of this section will be devoted to the proof of the above theorem. Our proof is constructive, namely, we will derive explicit recurrent formulas that 
allow one to find the corresponding $\ast$-product up to any given order. 

First of all, we extend the algebra $\mathcal{A}$ to the DG-algebra $\mathcal{A}[[t]]$  of formal power series in deformation parameter $t$; the action of the operator $d$ extends to $\mathcal{A}[[t]]$ by $k[[t]]$-linearity and continuity.  Correspondingly, the cochains of the bicomplex $C^{\bullet,\bullet}(\mathcal{A}[[t]],\mathcal{A}[[t]])$ are  assumed to be $k[[t]]$-multilinear and continuous with respect to the $t$-adic topology. This ensures the isomorphism  
\begin{equation}\label{FD}
C^{\bullet,\bullet}(\mathcal{A}[[t]], \mathcal{A}[[t]])
\simeq C^{\bullet,\bullet}(\mathcal{A}, \mathcal{A})[[t]]\,.
\end{equation}
In other words, each cochain  of $C^{\bullet,\bullet}(\mathcal{A}[[t]],\mathcal{A}[[t]])$ is completely specified by its values on the subalgebra $\mathcal{A}\subset \mathcal{A}[[t]]$. This allows us to view the  cochains as spanning a $k$-vector space 
graded by powers of $t$. Then the homogeneous subspaces  are precisely the eigenspaces  of the Euler operator $N=t\frac{d}{dt}$. Associated to this grading is a descending filtration: we say that the (filtration) degree of $f$ is bigger than $m$, if $f\in t^{m+1} C^{\bullet,\bullet}(\mathcal{A},\mathcal{A})[[t]]$.  In this case we write
$\deg f> m$.  The action of the Euler operator on the cochains $f\in C^{\bullet,\bullet}(\mathcal{A}[[t]],\mathcal{A}[[t]])$ is given by  
$$
(N f)(a_1\otimes \cdots\otimes a_n)= N f(a_1\otimes\cdots\otimes a_n)-\sum_{i=1}^n  f(a_1\otimes\cdots\otimes Na_i\otimes\cdots\otimes a_n)\,.
$$
It is not hard to see that 
$$
[N,d]=0\,,\qquad [N,\delta]=0\,,\qquad N[f,g]=[Nf,g]+[f,Ng]
$$
for all $f,g\in C^{\bullet,\bullet}(\mathcal{A}[[t]],\mathcal{A}[[t]])$.

Now, starting with the total differential $D=d+\delta$ in $C^{\bullet,\bullet}(\mathcal{A}[[t]],\mathcal{A}[[t]])$, we consider its formal deformation $D_\mu$ defined by 
$$
D_\mu a= Da +[\mu, a]\,, \qquad \forall a\in \mathcal{A}[[t]] \,,  
$$
for some cochain  $\mu\in C^{2,0}(\mathcal{A}[[t]],\mathcal{A}[[t]])$ with 
\begin{equation}\label{degmu}
\deg \mu>0\,.
\end{equation}  It is straightforward to see that the operator $D_\mu$ squares to zero iff the cochain $\mu$ satisfies the Maurer--Cartan equation 
\begin{equation}\label{mceq}
D\mu=-\frac12[\mu,\mu]
\end{equation}
or, what is the same,  
$$
d\mu =0\,,\qquad \delta\mu = -\frac12[\mu,\mu]\,. 
$$
The first equation just says that the cochain $\mu$, when restricted to the subalgebra $A[[t]]\subset \mathcal{A}[[t]]$, takes the values in $A[[t]]$.  
Then the second equation allows us to regard  $\mu$ as an associative deformation of the original multiplication in $A[[t]]$. In other words, the operator 
$\delta+[\mu,-]$ is a differential whenever it defines or comes from an associative product $m+\mu$ in $A[[t]]$.

Our task now is to construct a solution to Eq. (\ref{mceq}) by some class $[\lambda]\in H^2(\mathcal{A}^A)$ with a representative $\lambda\in Z(\mathcal{A})$. To this end, we introduce the  cochains $\Lambda$, $\Psi$ and $\Phi$ of the bicomplex $C^{\bullet,\bullet}(\mathcal{A}[[t]],\mathcal{A}[[t]])$ with the following distributions of bidegrees and filtration degrees:
\begin{equation}\label{deg}
\begin{array}{lll}
\mathrm{Deg}\, \Lambda =(0,2)\,,&\qquad \mathrm{Deg} \,\Psi=(0,1)\,,&\qquad  \mathrm{Deg}\, \Phi=(1,0)\,,\\[3mm]
\deg (\Lambda -\lambda)>0\,,&\qquad \deg \Psi>0\,,&\qquad \deg \Phi>0\,.
\end{array}
\end{equation}
It is also convenient to combine $\Psi$ and $\Phi$ into a single cochain $\Gamma=\Psi+\Phi$ of total degree $1$. Besides the  grading conditions, these cochains are supposed to satisfy the following `master equations':
\begin{equation}\label{BEq}
     D_\mu\Gamma=t\Lambda-N\mu\,,\qquad N\Lambda=[\Gamma,\Lambda]\,.
\end{equation}

By assumption, the complex (\ref{I-res}) is acyclic and splits over $k$. Hence, there is a contracting homotopy operator $h$. This operator extends to the complex $0\rightarrow A[[t]]\stackrel{\varepsilon}{\rightarrow}\mathcal{A} [[t]]$ by $k[[t]]$-linearity  and continuity with respect to the $t$-adic topology such that 
$$
h\varepsilon=1_{{A}[[t]]}\,,\qquad \varepsilon h+hd=1_{\mathcal{A}^0[[t]]}\,,\qquad hd+dh=1_{\mathcal{A}^p[[t]]}\,,\quad p>0 \,.
$$
Setting
$$
(h f)(a_1\otimes \cdots\otimes a_n)= h f(a_1\otimes\cdots\otimes a_n)\,,
$$
one easily checks that 
$$
hd+dh=1_{C^{\bullet,\bullet}}-\varepsilon\sigma\,,
$$
where $\sigma=h|_{\mathcal{A}^0}$.

\begin{lemma}
Eqs. (\ref{BEq}) have a unique solution satisfying  the additional condition 
\begin{equation}\label{NC}
h\Gamma=0\,.
\end{equation}
\end{lemma}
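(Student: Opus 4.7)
\emph{Proof plan.} The plan is to solve (\ref{BEq}) by a recursion on the filtration degree, using the contracting homotopy $h$ to invert $d$ and the gauge condition (\ref{NC}) to eliminate the residual freedom. Expand
\begin{equation*}
\mu=\sum_{n\geq 1}t^n\mu_n,\qquad \Lambda=\lambda+\sum_{n\geq 1}t^n\Lambda_n,\qquad \Gamma=\sum_{n\geq 1}t^n\Gamma_n,\quad \Gamma_n=\Psi_n+\Phi_n,
\end{equation*}
with the bidegrees as in (\ref{deg}). At order $t^0$ both master equations hold trivially: $\lambda$ is central and $d$-closed since it represents a class in $H^2(\mathcal{A}^A)$.

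For the inductive step, assume $\mu_k,\Gamma_k,\Lambda_k$ have been constructed for $k<n$ so that (\ref{BEq}) is satisfied modulo $t^n$. Extracting the $t^n$-coefficient of the first master equation and sorting by bidegree produces three equations,
\begin{equation*}
d\Psi_n=\Lambda_{n-1},\qquad d\Phi_n=-\delta\Psi_n-R^{(1,1)}_n,\qquad n\mu_n=-\delta\Phi_n-R^{(2,0)}_n,
\end{equation*}
in the pieces $C^{0,2}$, $C^{1,1}$, $C^{2,0}$ respectively, where each $R^{(p,q)}_n$ is an explicit sum of Gerstenhaber brackets of previously constructed data. The second master equation at order $t^n$ then algebraically fixes
\begin{equation*}
n\Lambda_n=[\Gamma_n,\lambda]+\sum_{k=1}^{n-1}[\Gamma_k,\Lambda_{n-k}].
\end{equation*}
Invoking the homotopy identity $dh+hd=1$, valid on cochains of positive $\mathcal{A}$-degree, which is where $d\Psi_n$ and $d\Phi_n$ take values, together with the gauge constraint $h\Psi_n=h\Phi_n=0$ extracted from (\ref{NC}), the only possible solutions of the first two equations are
\begin{equation*}
\Psi_n:=h\Lambda_{n-1},\qquad \Phi_n:=-h\bigl(\delta\Psi_n+R^{(1,1)}_n\bigr),
\end{equation*}
after which $\mu_n$ and $\Lambda_n$ are read off directly. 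Uniqueness is built in: at every step the gauge kills the kernel of $d$.

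The real technical content is the consistency check required to apply $h$, namely that $\Lambda_{n-1}$ and $\delta\Psi_n+R^{(1,1)}_n$ are $d$-closed. I would establish both by a parallel induction, differentiating the recursive formulas and exploiting $d^2=0$, the anticommutation $d\delta+\delta d=0$, the graded Leibniz rule $d[f,g]=[df,g]+(-1)^{|f|+1}[f,dg]$, and the Jacobi identity for the Gerstenhaber bracket, combined with the inductive validity of (\ref{BEq}) through order $n-1$. Since $N$ commutes with $d$ and $\delta$ and acts as a derivation of the bracket, the obstruction at each order collapses to a linear combination of already-verified lower-order instances of (\ref{BEq}); this Maurer--Cartan-style bootstrap is where the bulk of the work lies.
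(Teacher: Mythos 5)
Your constructive half coincides with the paper's: extracting the bidegree components of (\ref{BEq}), inverting the $d$-equations with the homotopy $h$ under the gauge (\ref{NC}), and reading off the recursion is exactly how the paper proceeds --- your formulas for $\Psi_n,\Phi_n,\mu_n,\Lambda_n$ are literally the paper's (\ref{rec-rel}), the paper merely packaging the $t$-expansion as the ODE system (\ref{ode1}) solved by iteration --- and your uniqueness argument (the gauge kills $\ker d$) is also the paper's.

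The divergence, and the gap, is the step you defer as ``the bulk of the work.'' The paper never performs that order-by-order closedness bootstrap: having reduced (\ref{BEq}), (\ref{NC}) to (\ref{PF}) plus (\ref{ode1}), it invokes unique formal solvability and stops, and the consistency burden is instead discharged in the next lemma by a one-stroke argument: set $T=D_\mu\Lambda$ and $R=D\mu+\frac12[\mu,\mu]$, derive linear homogeneous ODEs for them from (\ref{BEq}), and conclude $T=R=0$ from the vanishing initial data $T(0)=D\lambda=0$, $R(0)=0$. More importantly, your sketch of the bootstrap would not go through as written. Already at order one the obstruction is not ``a linear combination of lower-order instances of (\ref{BEq})'': since $\Psi_1=h\lambda$ and $dh\lambda=\lambda$, one gets $d(\delta\Psi_1)=-\delta\lambda$, so solvability of $d\Phi_1=-\delta\Psi_1$ needs the centrality $\delta\lambda=0$ --- an external hypothesis absent from your list of tools, and not recoverable from (\ref{BEq}) at lower orders, which are vacuous at $t^0$. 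Similarly, proving $d\Lambda_n=0$ from $\Lambda_n=\frac{1}{n}\sum_k[\Phi_k,\Lambda_{n-k}]$ produces terms $[\delta\Psi_k+\sum_j[\mu_j,\Psi_{k-j}],\Lambda_{n-k}]$ whose cancellation requires the $\delta$-part of $D_\mu\Lambda=0$ at lower orders; so the induction must carry the full identity $D_\mu\Lambda=0$ (the paper's $T=0$), not merely (\ref{BEq}). To close the gap, either strengthen the inductive hypothesis accordingly and use $D\lambda=0$ explicitly, or, more economically, imitate the paper's mechanism: introduce the ``curvature'' cochains measuring the failure of the $d$-equations and of $D_\mu\Lambda=0$, show they obey linear equations in $t$ with zero initial conditions, and conclude that they vanish, with no order-by-order cancellations needed.
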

\begin{proof}
Expanding Eqs. (\ref{BEq}) in homogeneous components with respect to the bidegree,  we get the four equations 
\begin{equation}\label{ml}
          N\mu=-\delta \Phi-[\mu,\Phi]\,,\qquad  N\Lambda=[\Phi,\Lambda]\,,
\end{equation}
\begin{equation}\label{FPs}
    d\Phi=-\delta \Psi-[\mu,\Psi]\,,\qquad    d\Psi=t\Lambda\,. 
\end{equation}
Using the homotopy operator and the normalization condition (\ref{NC}), we can solve Eqs. (\ref{FPs}) for $\Psi$ and $\Phi$ as follows:
\begin{equation}\label{PF}
\Psi = th\Lambda \,,\qquad \Phi =-th(\delta h\Lambda+[\mu, h\Lambda])\,.
\end{equation}
Substitution of these expressions  into (\ref{ml}) yields then the system of ODEs
\begin{equation}\label{ode1}
\begin{array}{l}
\dot{\mu}=\delta h(\delta h\Lambda+[\mu, h\Lambda])+[\mu,h(\delta h\Lambda+[\mu, h\Lambda])]\,,\\[3mm]\dot{\Lambda}=-[h(\delta h\Lambda+[\mu, h\Lambda]),\Lambda]\,,
\end{array}
\end{equation}
where the overdot stands for the derivative in $t$. 
One  can solve these equations by iterations and obtain a unique formal solution $\mu(t)$, $\Lambda(t)$ subject to the initial condition 
$$
\mu(0)=0\,,\qquad \Lambda(0)=\lambda\,,
$$
cf. Eqs. (\ref{degmu}) and (\ref{deg}). It is also seen that the cochains (\ref{PF}) obey the conditions (\ref{deg}) and (\ref{NC}). 

\end{proof}

\begin{lemma}
The cochain $\mu$ defined by Eqs. (\ref{BEq})  satisfies the Maurer-Cartan equation (\ref{mceq}) whenever 
\begin{equation}\label{dl}
D\lambda=0\,.
\end{equation}
\end{lemma}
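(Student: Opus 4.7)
The plan is to show that the Maurer--Cartan curvature $R := D\mu + \frac{1}{2}[\mu,\mu]$ vanishes identically as a formal power series in $t$, by deriving a linear ODE of the form $NR = [\Gamma, R]$ with zero initial data and invoking a uniqueness argument. Because $\deg \Gamma > 0$, matching powers of $t$ inductively forces every homogeneous component of $R$ to vanish.

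The crucial intermediate step is to establish the auxiliary identity $D_\mu \Lambda = 0$. I would obtain this by applying $D_\mu$ to the second master equation $N\Lambda = [\Gamma, \Lambda]$. Two ingredients enter: since $N$ commutes with $D$ and is a derivation of the Gerstenhaber bracket, one checks that $[N, D_\mu]\Lambda = [N\mu, \Lambda]$; and since $D_\mu$ is itself a graded derivation of the bracket (inheriting this from $D$ and from the Jacobi identity for $[\mu, -]$), $D_\mu[\Gamma, \Lambda] = [D_\mu \Gamma, \Lambda] + [\Gamma, D_\mu \Lambda]$. Substituting the first master equation $D_\mu \Gamma = t\Lambda - N\mu$ and using $[\Lambda, \Lambda] = 0$ (which holds because $\Lambda$ has arity zero, so the Gerstenhaber circle products $\Lambda \circ \Lambda$ are empty sums) produces a cancellation of the $[N\mu, \Lambda]$ terms, leaving $N D_\mu \Lambda = [\Gamma, D_\mu \Lambda]$. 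The initial value is $D_\mu \Lambda|_{t=0} = D\lambda + [\mu|_{t=0}, \lambda] = D\lambda = 0$ by hypothesis, and the uniqueness argument for this linear $t$-ODE gives $D_\mu \Lambda \equiv 0$.

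With $D_\mu \Lambda = 0$ in hand, I would then apply $D_\mu$ to the first master equation. The Jacobi identity yields the standard inner-derivation relation $D_\mu^2 = [R, -]$, so the left-hand side becomes $[R, \Gamma]$. On the right, $D_\mu(t\Lambda - N\mu) = -D_\mu N\mu$, and the commutation $[N,D]=0$ together with the derivation property of $N$ on the bracket simplify this to $-NR$. Graded antisymmetry rewrites $[R, \Gamma] = -[\Gamma, R]$, producing the desired ODE $NR = [\Gamma, R]$. Since $\mu|_{t=0} = 0$ by the filtration condition $\deg\mu > 0$, we have $R|_{t=0} = 0$, and the same uniqueness argument concludes $R \equiv 0$, i.e., $\mu$ satisfies the Maurer--Cartan equation.

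I expect the main obstacle to be bookkeeping of the various sign conventions: the shifted grading of the Gerstenhaber bracket, the derivation signs for $D$, $D_\mu$, and $[\mu,-]$, and ensuring the Jacobi identity gives $D_\mu^2 = [R, -]$ cleanly. The conceptual content is that the two master equations (\ref{BEq}) are engineered precisely so that the $D_\mu$-variation of each produces the input needed to handle the other: the second equation delivers $D_\mu \Lambda = 0$, which then makes the $D_\mu$-variation of the first an autonomous linear ODE for the curvature $R$.
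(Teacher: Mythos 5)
Your proposal is correct and follows essentially the same route as the paper: both apply $D_\mu$ to the master equations (\ref{BEq}), use $D_\mu^2=[R,-]$, the derivation properties of $N$ and $D_\mu$ on the Gerstenhaber bracket, and $[\Lambda,\Lambda]=0$ to obtain linear $t$-ODEs for $T=D_\mu\Lambda$ and $R$ with vanishing initial data, concluding $R\equiv 0$ by uniqueness. The only difference is presentational: you solve the (already triangular) system sequentially, first $T\equiv 0$ and then $R\equiv 0$, whereas the paper writes it as a coupled system before invoking the same regularity and uniqueness argument.
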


\begin{proof}

Let us put
$$
T=D_\mu\Lambda \,,\qquad  R=D\mu+\frac12[\mu,\mu]\,.
$$
We are going to show that the cochain $R$ vanishes for any $\mu$ and $\Lambda$ satisfying 
Eqs. (\ref{BEq}). 
Applying the operator $D_\mu$ to both sides of Eqs. (\ref{BEq}), we find  
$$
[R,\Gamma]=tT -NR \,,\qquad NT=[\Gamma, T]\,, 
$$
provided $\Lambda$, $\Gamma$, and $\mu$ obey (\ref{BEq}). This yields the system of ODEs
$$
\dot{ R}=T-\frac1t[R,\Gamma]\,,\qquad \dot{T}=\frac1t[\Gamma,T]\,.
$$
Since $\deg \Gamma >0$ the right hand sides of these equations are regular in $t$. So, the equations have a unique solution $R=0$
 and $T=0$ subject to the  initial conditions
 $$
 R(0)=0\,,\qquad T(0)=D\lambda=0\,.
 $$
\end{proof}
The condition (\ref{dl}) amounts to the equations 
$$
\delta \lambda=0\,,\qquad d\lambda=0\,.
$$
The first equation defines $\lambda$ as an element of  the center of the algebra $\mathcal{A}$, then the second equation allows us to regard  $\lambda$ as  a $2$-cocycle of the complex $(\mathcal{A}^A,d)$. Combining the two Lemmas above yields the proof of Theorem \ref{Th}. 

For the sake of completeness we also present explicit recurrent relations for determining $\mu$ up to any given order in deformation parameter.  These relations are obtained by expanding all the cochains $\mu$, $\Lambda$, $\Phi$, and $\Psi$ as well as the defining equations (\ref{ml}) and (\ref{FPs}) in powers of $t$.  We get
\begin{equation}\label{rec-rel}
    \begin{array}{l}
         \displaystyle \mu_n=-\frac1n\Big(\delta \Phi_n+\sum_{k=1}^{n-1}[\mu_k,\Phi_{n-k}]\Big) \,,\\[4mm]
        \displaystyle  \Phi_n=-h\Big (\delta \Psi_n+\sum_{k=1}^{n-1}[\mu_k,\Psi_{n-k}]\Big)\,,\\[5mm]
        \displaystyle  \Psi_n=h\Lambda_{n-1}\,,\\[3mm]
       \displaystyle   \Lambda_{n}=\frac1{n}\sum_{k=1}^n[\Phi_k,\Lambda_{n-k}]\,,\qquad \forall n>0\,,\\[4mm]
         \Lambda_0=\lambda\,.
    \end{array}
\end{equation}

From these relations one can readily find that the first-order  deformation is governed by the following noncommutative Poisson bracket:
\begin{equation}\label{NCPS}
\mu_1=\delta h\delta h\lambda \,.
\end{equation}
Although the Hochschild cocycle $\mu_1\in \mathcal{C}^{2}(A,A)$ looks like a $\delta$-coboundary, it is not the case. One should keep in mind that the `potential'  $h\delta h\lambda $ for $\mu_1$ is not an element of $\mathcal{C}^{1}(A,A)$, rather it takes values in $\mathcal{A}^0$. 

The expression for the second-order deformation is a bit more cumbersome: 
$$
\mu_2=\frac12\Big([\delta h\delta h\lambda, h\delta h\lambda ]+\delta h[\delta h\delta h\lambda, h\lambda] -\delta h \delta h [h\delta h\lambda,\lambda] \Big)\,.
$$

\begin{remark}
In case $\mathcal{A}$ is an injective resolution of $A$, applying the standard spectral sequence arguments to the bicomplex $C^{\bullet,\bullet}(\mathcal{A},\mathcal{A})$ shows that Rel. (\ref{NCPS}) is a particular manifestation of the quasi-isomorphism $$\tau^\bullet: C^\bullet(\mathcal{A}^A)\rightarrow \mathcal{C}^\bullet(A,A)\,, \qquad \tau^n=(\delta h)^n\,,$$ inducing the (right) isomorphism (\ref{Ext}) in cohomology. In that case, any nontrivial $2$-cocycle $\lambda \in C^2(\mathcal{A}^A)$ gives rise to a nontrivial infinitesimal deformation of the algebra $A$. The additional requirement  $\lambda\in Z(\mathcal{A})$ ensures that  the deformation is integrable. 
\end{remark}

\section{Examples of quantization}\label{Examples}

In this section, we illustrate the above method of noncommutative deformation quantization by several examples. 
Although the examples are well known, it seems instructive to re-examine them all from a single perspective.
\subsection{Moyal $*$-product}\label{Moyal} Let $V$ be an $n$-dimensional vector space over $k$. 
We are interested in associative deformations of the commutative algebra  $A=S(V^\ast)$, the symmetric algebra of the dual space $V^\ast$.  The basic ingredient of our method is an appropriate multiplicative coresolution  of the algebra $A$ in the category of $A$-bimodules. To construct such a  coresolution we introduce the symmetric algebra $S(V)$ of $V$. Upon choosing linear coordinates $\{x^i\}$ on $V$ and $\{p_i\}$ on $V^\ast$, we can make the identification $S(V^\ast)=k[x^1,\ldots, x^n]$.
Then  the $k$-vector space $B=\mathrm{Hom}(S(V^\ast), S(V^\ast))$ is isomorphic to the space of formal power series in $p$'s with coefficients in polynomial functions in $x$'s. The space $B$ is given the structure of a noncommutative associative algebra with respect to the following
Moyal-type multiplication: 
\begin{equation}\label{b-prod}
a\bullet b=a\exp\left({\frac{\stackrel{\leftarrow}{\partial}}{\partial x^i}\frac{\stackrel{\rightarrow}{\partial}}{\partial p_i}}\right)b\,,\qquad \forall a,b\in B\,.
\end{equation}
One can easily see that the $\bullet$-product is well defined on $B$.\footnote{Actually, this product originates from the composition of homomorphisms of $\mathrm{Hom}(S(V^\ast), S(V^\ast))$. If we identify these homomorphisms with pseudo-differential operators acting on polynomials, then the $\bullet$-product is nothing else but the product of their $px$-symbols \cite[Ch. 5]{BSh}. } Finally, we define the algebra $\mathcal{A}=B\otimes \Lambda (V)$, where the second factor is given by the exterior algebra of the space $V$. 
The standard grading on $\Lambda(V)$ induces a grading on $\mathcal{A}=\bigoplus\mathcal{A}^l$, so that the homogeneous subspaces $\mathcal{A}^l$ are spanned 
by the $l$-forms
\begin{equation}\label{om}
\omega =\omega^{i_1\cdots i_l}(x,p)dp_{i_1}\wedge\cdots \wedge dp_{i_l}\,,\qquad \omega^{i_1\cdots i_l}(x,p)\in B\,.
\end{equation}
To make $\mathcal{A}$ into a DG-algebra, we endow it with the differential $d: \mathcal{A}^l\rightarrow \mathcal{A}^{l+1}$ defined by 
\begin{equation}\label{d}
d\omega =\frac{\partial \omega^{i_1\cdots i_l}}{\partial p_j} dp_j\wedge dp_{i_1}\wedge\cdots \wedge dp_{i_l}\,.
\end{equation}
By the Poincar\'e Lemma this differential is acyclic in positive degrees and $H^0(\mathcal{A})\simeq A=S(V^\ast)$. Thus, we get a coresolution 
$0\rightarrow A\stackrel{\varepsilon}{\rightarrow} \mathcal{A}$, with $\varepsilon: A\rightarrow \mathcal{A}^0$ being the inclusion. This coresolution is certainly not injective: according to (\ref{b-prod}) 
$$
a\bullet b=ab\qquad \forall a\in \mathcal{A},\quad \forall b\in A
$$
and the $A$-bimodule $\mathcal{A}$ is not divisible from the right. Nonetheless we can proceed with computation of the $A$-invariant cohomology $H^\bullet(\mathcal{A}^A)$. Clearly, it is enough to check the invariance  only for the generators $\{x^i\}$. We find 
$$
x^i\bullet \omega -\omega \bullet x^i=0\quad \Leftrightarrow \quad \frac{\partial\omega }{\partial p_i}=0\,.
$$
Hence, $\mathcal{A}^A = S(V^\ast)\otimes \Lambda (V)\simeq H^\bullet (\mathcal{A}^A)$. In other words, all the nontrivial cocycles are given by the forms 
\begin{equation}\label{HKR}
\omega =\omega^{i_1\cdots i_l}(x)dp_{i_1}\wedge\cdots \wedge dp_{i_l}\,.
\end{equation}
On the other hand, by the Hochschild--Kostant--Rosenberg theorem 
$$HH^\bullet(S(V^\ast),S(V^\ast))\simeq S(V^\ast)\otimes \Lambda (V)\,,$$
and we infer that our coresolution computes the entire Hochschild cohomology of $S(V^\ast)$.  

Although each $2$-form $\omega=\pi^{ij}(x)dp_i\wedge dp_j$ (= a bivector field on $V$) gives rise to an infinitesimal deformation of $S(V^\ast)$, only a part of these deformations is integrable. The necessary and sufficient condition for integrability is the vanishing of the Schouten bracket $[\omega,\omega]_S$, the result following from the Kontsevich formality theorem \cite{Konts}. In our approach integrability is 
ensured by a more restrictive condition, namely, $\omega\in Z(\mathcal{A})$. It is easy to see that a form $\omega$ belongs to the center of $\mathcal{A}$ iff 
$$
\omega =\omega^{i_1\cdots i_l}dp_{i_1}\wedge\cdots \wedge dp_{i_l}\,,\qquad  \omega^{i_1\cdots i_l}\in k\,.
$$
Hence, all the constant bivectors $\lambda=\pi^{ij}dp_i\wedge dp_j$ give rise to integrable deformations.  
Using the standard contracting homotopy $h: \mathcal{A}^{l}\rightarrow \mathcal{A}^{l-1}$, 
\begin{equation}\label{h}
h\omega=\int_0^1 {ds}s^{l-1}\omega^{i_1\cdots i_l}(x,sp)p_{i_1}dp_{i_2}\wedge\cdots\wedge dp_{i_l}\,,
\end{equation}
one readily finds that the first-order deformation is given by the Poisson bracket 
$$
\mu_1(a,b)=\frac12\pi^{ij}\frac{\partial a}{\partial x^i}\frac{\partial b}{\partial x^j}\,,\qquad \forall a,b\in k[x^1,\ldots,x^n]\,.
$$
The whole deformation, being constructed by formulas (\ref{rec-rel}), reproduces the Moyal $\ast$-product.

\subsection{Symplectic reflection algebras} 
We keep the notation of the previous subsection. Let $\Gamma$ be a finite group acting faithfully and linearly on $V$. Now we are interested in formal deformations of the smash-product algebra  $A_\Gamma=S(V^\ast)\rtimes \Gamma$.  As a $k$-vector space $A_\Gamma$ is equal to $S(V^\ast)\otimes k[\Gamma]$ and the multiplication is defined by  
$$
\gamma \cdot f=  {}^\gamma f \cdot \gamma \,,\qquad \forall f\in V^\ast,\quad \forall \gamma \in \Gamma\,,
$$
where ${}^\gamma f(v)=f(\gamma^{-1}v)$ for all $v\in V$. The linear representation of $\Gamma$ in $V$ induces a representation in the space $\mathcal{A}=B\otimes \Lambda (V)$ constituted  by the forms (\ref{om}). Furthermore, the $\bullet$-product (\ref{b-prod}) is obviously invariant under the action of $\Gamma$ and we can define the smash-product algebra $\mathcal{A}_{\Gamma}=\mathcal{A}\rtimes \Gamma$. The differential (\ref{d}) extends to $\mathcal{A}_\Gamma$ by setting $d\gamma=0$ for all $\gamma\in \Gamma$. Thus, $\mathcal{A}_\Gamma$ is a DG-algebra extending  the subalgebra  $A_\Gamma=\ker d \cap \mathcal{A}_\Gamma^0$. 

As an $A$-bimodule the algebra $\mathcal{A}_\Gamma$ splits into the direct sum $\mathcal{A}_\Gamma=\bigoplus_{\gamma\in \Gamma} \mathcal{A}_\gamma$ of submodules,  so that the generic element $\omega\in \mathcal{A}_\Gamma$ expands as  
\begin{equation}\label{k-f}
\omega=\sum_{\gamma\in \Gamma}\omega_\gamma \gamma\,,\qquad \omega_\gamma\in \mathcal{A}_\gamma\,.
\end{equation}
Each module $\mathcal{A}_\gamma$ is isomorphic to $\mathcal{A}$ as a vector space, but the right action of $A$ is now twisted by the element $\gamma$, i.e.,
$$
(a\otimes b)c=a\bullet c\bullet {}^\gamma b\,,\qquad \forall a,b\in A\,,\quad c\in \mathcal{A}_\gamma\,.
$$
Clearly, the $m$-form (\ref{k-f}) is $A$-invariant iff 
$$
x^i\bullet \omega_\gamma -\omega_\gamma\bullet {}^\gamma x^i=0\,,\qquad \forall \gamma\in \Gamma\,, \quad \forall i=1,\ldots, n\,.
$$
This is equivalent to the set of differential equations 
\begin{equation}\label{d-eq}
(x^i-{}^\gamma x^i)\omega_\gamma+\frac{\partial \omega_\gamma}{\partial p_i}=0\,. 
\end{equation}
Let us introduce the exponential functions 
\begin{equation}\label{E}
 E_\gamma=e^{-\langle p, x-{}^\gamma x\rangle }\,,
\end{equation}
where the triangle brackets stand for the canonical pairing between the spaces $V$ and $V^\ast$. 
Then the general solution to Eqs. (\ref{d-eq}) can be written as 
\begin{equation}\label{om-g}
\begin{array}{c}
\omega_{\gamma}=E_\gamma \bar\omega_\gamma \,,\\[3mm]\forall  \,\bar\omega_\gamma =\bar\omega_\gamma^{i_1\cdots i_m}(x)dp_{i_1}\wedge\cdots\wedge dp_{i_m} \in S(V^\ast)\otimes \Lambda^m(V)\,.
\end{array}
\end{equation}
Verifying the $\Gamma$-invariance of the form (\ref{k-f}),  we find
$$
 {}^\alpha E_\gamma=\alpha E_\gamma \alpha^{-1}=E_{\alpha\gamma \alpha^{-1}}
$$
and 
$$
{}^\alpha\omega=\alpha\omega \alpha^{-1}=\sum_{\gamma\in \Gamma}E_{\alpha\gamma \alpha^{-1}} {}^\alpha\bar\omega_\gamma \alpha\gamma \alpha^{-1}=\sum_{\gamma\in \Gamma}E_{\gamma}  {}^\alpha\bar\omega_{\alpha^{-1}\gamma \alpha} \gamma\,.
$$
Since the exponential functions $E_\gamma$ are linearly independent over $S(V^\ast)\otimes \Lambda(V)$, we conclude that ${}^\alpha\omega=\omega$ iff 
$$
{}^\alpha\bar\omega_\gamma=\bar\omega_{\alpha\gamma \alpha^{-1}}\qquad \forall \gamma, \alpha\in \Gamma\,.
$$
If we treat the assignment $\gamma\mapsto \bar\omega_\gamma$ as a map from $\Gamma$ to $S(V^\ast)\otimes \Lambda(V)$, then the last relation identifies this map as $\Gamma$-equivariant with respect to the adjoint action of the group  $\Gamma$ on itself.

\begin{remark}
Notice that the exponentials (\ref{E}) define an antihomomorphism from $\Gamma$ to the group of invertible elements of the $\bullet$-product algebra $B$.  Indeed,  
$$
\begin{array}{c}
E_\gamma\bullet E_{\beta}=e^{-\langle p,x-{}^\gamma x \rangle } e^{\langle\frac{\stackrel{\leftarrow}{\partial}}{\partial x},\frac{\stackrel{\rightarrow}{\partial}}{\partial p}\rangle}e^{-\langle p, x-{}^{\beta} x \rangle }\\[3mm]
=e^{-\langle p,(x+ \frac{\stackrel{\rightarrow}{\partial}}{{\partial} p})-{}^\gamma (x+ \frac{\stackrel{\rightarrow}{\partial}}{{\partial} p})\rangle } e^{-\langle p,x-{}^{\beta}x \rangle }\\[3mm]
=e^{-\langle p,x-{}^{\beta\gamma} x\rangle } =E_{\beta\gamma}\,.
\end{array}
$$
Combining this antihomomorphism with inversion on $\Gamma$, one gets the homomorphism $\gamma\mapsto E_{\gamma^{-1}}$. Furthermore, all the $\Gamma$-automorphisms of the $\bullet$-product algebra $B$ turn out to be  internal: 
$$
{}^\gamma a = E_{\gamma^{-1}}\bullet a \bullet E_\gamma\,,\qquad \forall a\in B\,.
$$
The last property is enough to check only for the generators $x^i$ and $p_i$. 

\end{remark}

Our  computations above show that the DG-algebra $\mathcal{A}_\Gamma^{A_\Gamma}$, when considered as a cochain complex, splits into the direct sum of subcomlexes:
$$
\mathcal{A}_\Gamma^{A_\Gamma} = \bigoplus_{{[\gamma]}\in C(\Gamma)} \big(S(V^\ast)\otimes \Lambda(V)\big)^{Z(\gamma)}\,.
$$
Here $C(\Gamma)$ denotes the space of conjugacy classes of $\Gamma$ and we fix a representative ${\gamma}\in [\gamma]$ in each class $[\gamma] \in C(\Gamma) $; $Z(\gamma)$ stands for the centralizer of the representative $\gamma\in\Gamma$. The elements of the subcomplex $\Omega_\gamma=\big(S(V^\ast)\otimes \Lambda(V)\big)^{Z(\gamma)}$ are given by $Z(\gamma)$-invariant forms (\ref{om-g}). The differential in $\mathcal{A}_\Gamma$ induces the following coboundary operator in 
each $\Omega_\gamma$:
$$
\partial_\gamma \bar\omega_\gamma=-\langle dp,x-{}^\gamma x\rangle\wedge \bar\omega_\gamma\,.
$$
The cohomology of $\partial_\gamma$ is easily computed by decomposing the carrier space into the direct sum $V=V^\gamma\oplus N^\gamma$, where 
$V^\gamma=\ker (1-\gamma)|_V$ and $N^\gamma=\mathrm{im} (1-\gamma)|_V$. We put $l(\gamma)=\dim N^\gamma$. Clearly, $l(\alpha\gamma \alpha^{-1})=l(\gamma)$ for all $\gamma,\alpha\in \Gamma$. Then it is not hard to find that\footnote{See e.g. \cite[Sec. 3]{NPPT} or \cite[Prop. 4.2]{ShSk} for similar computations.}  
$$
H^m(\Omega_\gamma, \partial_\gamma)=\Big( S({V^\gamma}^\ast)\otimes \Lambda^{m-l(\gamma)}({V^\gamma})\otimes \Lambda^{l(\gamma)}({N^\gamma})    \Big)^{Z(\gamma)} \,.
$$
This agrees with the results of Ref. \cite{NPPT}, where it was shown that 
$$
HH^\bullet(S(V^\ast)\rtimes \Gamma, S(V^\ast)\rtimes \Gamma)\simeq\bigoplus_{[\gamma]\in C(\Gamma)}\big(S({V^\gamma}^\ast)\otimes \Lambda^{\bullet-l(\gamma)}({V^\gamma})\otimes \Lambda^{l(\gamma)}({N^\gamma})\Big)^{Z(\gamma)}.
$$

Thus, we see that our coresolution $\mathcal{A}_\Gamma$ computes the entire  Hochschild cohomology of the smash-product algebra $A_\Gamma$. 
For the second cohomology group the structure of the direct product above can be refined. Notice that $\Lambda^1({N^\gamma})={N^\gamma}$
and for $l(\gamma)=1$  the $1$-dimensional space $\Lambda^{l(\gamma)}({N^\gamma})$ has no $\gamma$-invariant vectors by the definition of $N^\gamma$. Hence, the group elements $\gamma\in \Gamma$ with $l(\gamma)=1$ do not contribute to the Hochschild cohomology.  Furthermore, when the action of $\Gamma$ on $V$ is faithful\footnote{The case of a nonfaithful action was considered in \cite{SW-1}.}, the identity $e\in \Gamma$  is the only group element with $l(\gamma)=0$. Therefore, we conclude
$$
HH^2(A_\Gamma,A_\Gamma)=\big( S(V^\ast)\otimes \Lambda^2(V)\big)^{\Gamma}\oplus \Big(\bigoplus_{\gamma\in \Gamma, l(\gamma)=2} S({V^\gamma}^\ast)\otimes \Lambda^2({N^\gamma})\Big)^{\Gamma}.
$$
This is the space  of infinitesimal deformations of the algebra $A_\Gamma$. Having in mind integrable deformations, we are looking for nontrivial $2$-cocycles belonging to the center of the algebra $\mathcal{A}_\Gamma$. These are given by the $\Gamma$-invariant forms (\ref{k-f}) and (\ref{om-g}) satisfying the additional conditions:
$$
p_i\bullet \omega -\omega\bullet p_i=0\quad \Leftrightarrow\quad \frac{\partial \bar\omega_\gamma}{\partial x^i}=0\,,\quad \forall \gamma\in \Gamma\,.
$$
In other words, the coefficients of the central forms (\ref{k-f}) depend on $x$'s and $p$'s only through the exponential multipliers (\ref{E}).  

Consider now a special case that $V$ is a symplectic vector space and $\Gamma\subset \mathrm{Sp}(V)$. Let $\pi$ denote the Poisson bivector on $V$ dual to the symplectic structure. An element $\gamma\in \mathrm{Sp}(V)$  is called a symplectic reflection if $l(\gamma)=2$. Correspondingly,  $\Gamma$ is a symplectic reflection group if it is generated by symplectic reflections \cite{EG}. Denote by $S_\Gamma$ the set of all symplectic reflections of $\Gamma$. Starting from the nondegenerate Poisson  structure $\pi$, we can define the following $2$-cocycle of the algebra $\mathcal{A}_\Gamma$:
$$
\lambda =\omega_e+\sum_{\gamma\in S_\Gamma}c(\gamma)\omega_\gamma \gamma \,,
$$
$$
\omega_e=\pi(dp,dp)=\pi^{ij}dp_i\wedge dp_j \in \Lambda^2(V)^\Gamma\,,
$$
$$
\omega_\gamma=E_\gamma \bar \omega_\gamma\,,\qquad \bar\omega_\gamma=\pi(dp-d^\gamma p, dp-d^\gamma p) =\pi_{\gamma}(dp,dp)\in \Lambda^2(N^\gamma)^{Z(\gamma)}\,.
$$
The $\Gamma$-invariance of $\lambda$ requires that $c(\alpha \gamma \alpha^{-1})=c(\gamma)$ for all $\alpha,\gamma\in \Gamma$, that is, $c(\gamma)$ is a class function. The corresponding Poisson structure on $\mathcal{A}_\Gamma$ can now be restored by the general formula (\ref{NCPS}). Using the contracting homotopy (\ref{h}), we find 
$$
(h\delta h\omega_\gamma)(a) =h\left[ \Big(\frac{\partial a}{\partial x^i}\Big)\Big(x+\frac{\partial}{\partial p}\Big)\int_0^1ds s e^{-s\langle p, x-{}^\gamma x\rangle }
\pi^{ij}_\gamma dp_j\right] 
$$
$$
=h\int_0^1 ds s\Big(\frac{\partial a}{\partial x^i}\Big)\big((1-s)x+ s^\gamma x \big)e^{-s\langle p, x-{}^\gamma x\rangle }
\pi^{ij}_\gamma dp_j
$$
$$
=\int_0^1 du \int_0^1 ds s\Big(\frac{\partial a}{\partial x^i}\Big)\big((1-s)x+ s^\gamma x \big)e^{-su\langle p, x-{}^\gamma x\rangle }
\pi^{ij}_\gamma p_j
$$
and then
$$
(\delta h\delta h \omega_\gamma)(a,b)
$$
$$
\left .= b\Big(x+\frac{\partial}{\partial p}\Big) \int_0^1 du \int_0^1 ds s\Big(\frac{\partial a}{\partial x^i}\Big)\big((1-s)x+ s^\gamma x \big)e^{-su\langle p, x-{}^\gamma x\rangle }
\pi^{ij}_\gamma p_j\right|_{p=0}
$$
$$
=\int_0^1 du \int_0^1 ds s\Big(\frac{\partial a}{\partial x^i}\Big)\big((1-s)x+ s^\gamma x \big)\Big(\frac{\partial b}{\partial x^j}\Big)\big((1-su)x+ su^\gamma x \big)
\pi^{ij}\,.
$$
Here we simplified  our calculations by noticing that, by construction, $\mu_1(a,b)$ must be independent of $p$'s, so that  we can put $p_i=0$. 
After the change  of variables $w=su$, we can write the Poisson structure $\mu_1$ through the integrals over a $2$-simplex: 
\begin{equation}\label{SRA}
\mu_1(a,b)=(\delta h\delta h\lambda)(a,b)=\frac12\frac{\partial a}{\partial x^i}\frac{\partial b}{\partial x^j}\pi^{ij}
\end{equation}
$$
+\sum_{\gamma\in S_\Gamma }c(\gamma)\int\limits_{0<w<s<1} dwds\Big(\frac{\partial a}{\partial x^i}\Big)\big((1-s)x+ s^\gamma x \big)\Big(\frac{\partial b}{\partial x^j}\Big)\big((1-w)x+ w^\gamma x \big)
\pi^{ij}_\gamma \gamma\,.
$$

To the best of our knowledge this representation for the noncommutative Poisson structure on $A_\Gamma=S(V^\ast)\rtimes \Gamma$ is new, cf. \cite[Sec. 3.1]{HT}.  Thus, any constant (i.e., of polynomial degree zero) 2-cocycle of the algebra ${A}_\Gamma$ defines a noncommutative Poisson structure.
This is in agreement with the general results of Ref. \cite[Cor. 8.2]{SW-3}.  
The higher-order deformations can be found in a similar way by formulas (\ref{rec-rel}). The resulting algebra is the formal analog of  the symplectic reflection algebra introduced by Etingof and Ginzburg \cite{EG}. It is a $(p+1)$-parameter deformation of the algebra $A_\Gamma$, with $p$ being the number of conjugacy classes in $S_\Gamma$. 

\subsection{Smash products of the Weyl algebra} In case $c(\gamma)=0$, the sum (\ref{SRA}) reduces to the canonical Poisson bracket on $V$, whose deformation quantization gives the Moyal $\ast$-product of Sec. \ref{Moyal}. Actually, the Moyal deformation of the polynomial algebra $S(V^\ast)$ is not formal and one may put the deformation parameter $t$ to be an arbitrary number, say $t=2$. This gives the space $S(V^\ast)$ the structure of a noncommutative, associative algebra $\mathrm{Weyl}(V)$, called the polynomial Weyl algebra of the symplectic vector space $V$. The group $\Gamma\subset \mathrm{Sp}(V)$ acts naturally on $\mathrm{Weyl}(V)$ by automorphisms and we can form the smash-product algebra $\mathrm{Weyl}(V)\rtimes \Gamma$. It follows from the above consideration  that the algebra $\mathrm{Weyl}(V)\rtimes \Gamma$ admits the $p$-parameter deformation associated to the functions $c(\gamma)$ in Eq. (\ref{SRA}). Moreover, by the Alev--Farinati--Lambre--Solotar (AFLS) theorem \cite[Sec. 6.1]{AFLS} this deformation exhausts all the possibilities. One could try to obtain this deformation directly starting from the Weyl algebra. The main problem would then to find an appropriate coresolution. Actually, such a coresolution has been already constructed in our recent paper \cite{ShSk}. As a $k$-vector space it is given by the space $\mathcal{A}_\Gamma$ of the previous subsection, while the  $\bullet$-product (\ref{b-prod}) is now replaced by 
$$
a\bullet b = a e^{\frac{\stackrel{\leftarrow}{\partial}}{\partial x^i}\Big (\frac{\stackrel{\rightarrow}{\partial}}{\partial p_i} + \pi^{ij}\frac{\stackrel{\rightarrow}{\partial}}{\partial x^j}\Big)} b\,,\qquad \forall a,b\in B\,.
$$
In accordance with the AFLS theorem the central $2$-cocycles of the DG-algebra $\mathcal{A}_\Gamma$ are given by the linear combinations \cite{ShSk}
$$
\lambda = \sum_{\gamma\in S_\Gamma}c(\gamma)E_\gamma \pi_\gamma(dp,dp)\gamma \,,
$$
where $c(\gamma)$ is a class function on $\Gamma$ and 
$$
 E_\gamma=e^{-\langle p, x-{}^\gamma x\rangle +\pi(p,{}^\gamma p)}\,.
$$
Again, one can easily check that $E_\gamma\bullet E_\beta=E_{\beta\gamma}$. Omitting intermediate computations, which are similar to those of the previous subsection, we just write down the final expression for the noncommutative Poisson bracket: 
\begin{equation}\label{mmm}
\mu_1(a,b)=(\delta h\delta h \lambda)(a,b)
\end{equation}

$$
=\sum_{\gamma\in S_\Gamma}c(\gamma)\int\limits_{0<w<s<1} dwds   e^{ \langle p_1, (1-s)x+s^\gamma x \rangle +\langle p_2, (1-w)x+w^\gamma x \rangle} $$
$$
\times e^{\pi(p_2,p_1) +s\pi(p_1,p_2-{}^\gamma p_2-{}^\gamma p_1) +w\pi(p_2,p_1-{}^\gamma p_1-{}^\gamma p_2)}
$$
$$
\left.\times e^{s^2\pi(p_1,{}^\gamma p_1)+sw[\pi(p_1,{}^\gamma p_2)+\pi(p_2,{}^\gamma p_1)]+w^2\pi(p_2,{}^\gamma p_2)}
\pi_\gamma (p_1,p_2)a(x_1)b(x_2) \gamma \right|_{x_1=x_2=0}\,.
$$
Here  
$$
p_1=\left\{\frac{\partial}{\partial x_1^i}\right\}\,, \qquad p_2=\left\{\frac{\partial}{\partial x_2^i}\right\}\,.
$$
The exponential function in the above integral is to be expanded in the Taylor series and integrated term by term. As the functions $a(x_1)$  and $b(x_2)$ are polynomial, only finitely many terms contribute nontrivially to $\mu_1$. 

In the special case that $V$ is a two-dimensional symplectic space and $\Gamma$ is generated by the parity automorphism, ${}^\gamma x=-x$, Rel. (\ref{mmm}) reproduces the Feigin--Felder--Shoikhet $2$-cocycle of Ref. \cite{FFS}. 

\subsection*{Acknowledgments}
We are grateful to Xiang Tang for a useful correspondence. We also thank the anonymous referee for many valuable remarks.


\end{document}